\documentclass{article}
\usepackage[margin=1in]{geometry}

\usepackage{amsmath}
\usepackage{amssymb}
\usepackage{amsthm}
\usepackage{booktabs}
\usepackage{graphicx}
\usepackage{hyperref}
\usepackage{natbib}

\newcommand{\weakly}{\xrightarrow{d}}
\newcommand{\distequal}{\overset{d}{=}}
\newcommand{\inprob}{\overset{P}{\to}}
\newcommand{\expit}{\textnormal{expit}}
\newcommand{\est}{\hat{\Psi}_n}
\newcommand{\bootest}{\hat{\Psi}^*_m}
\newcommand{\data}{\mathcal{D}_n}

\newcommand{\csint}{\mathcal{I}_{(m,n,B)}}
\newcommand{\estimand}{\Psi(P)}
\newcommand{\normal}{\mathcal{N}(0, \sigma^2)}
\newcommand{\var}{\textnormal{Var}}
\newcommand{\subsamplingfactor}{\sqrt{\frac{mn}{n-m}}}
\newcommand{\subpercentage}{\frac{m}{n}}
\newcommand{\argmin}{\textnormal{argmin}}

\author{\small Johan Sebastian Ohlendorff$^{1,*}$, Anders Munch$^{1}$,  Kathrine Kold Sørensen$^{2}$, \\
\small and Thomas Alexander Gerds$^{1}$ \\
\small $^{1}$Section of Biostatistics, University of Copenhagen, Denmark\\ \small$^{2}$Department of Cardiology, Nordsjællands Hospital, Denmark}

\newtheorem{remark}{Remark}
\newtheorem{theorem}{Theorem}

\date{}
\title{Cheap Subsampling bootstrap confidence intervals for fast and robust inference}
\begin{document}

\maketitle

\begin{abstract}
Bootstrapping is often applied to get confidence limits for semiparametric inference of a target parameter in the presence of nuisance parameters. 
Bootstrapping with replacement can be computationally expensive and problematic when cross-validation is used
in the estimation algorithm due to duplicate observations in the bootstrap samples.
We provide a valid, fast, easy-to-implement subsampling bootstrap method for constructing confidence intervals for asymptotically linear estimators and discuss its application to semiparametric causal inference.
Our method, inspired by the Cheap Bootstrap \citep{lamCheapBootstrapMethod2022}, leverages the quantiles of a \(t\)-distribution and
has the desired coverage with few bootstrap replications.
We show that the method is asymptotically valid if the subsample size is chosen appropriately as a function of the sample size.
We illustrate our method with data from the LEADER trial \citep{LEADER}, obtaining confidence intervals for a longitudinal targeted minimum loss-based estimator \citep{laanTargetedMinimumLoss2012}.
Through a series of empirical experiments, we also explore the impact of subsample size, sample size, and the number of bootstrap repetitions on the performance of the confidence interval.
\end{abstract}

{\it Keywords:} bootstrap; causal inference; computational efficiency; subsampling; targeted learning.

\section{Introduction}
Epidemiological studies of observational data are often characterized
by large sample sizes and analyzed using statistical algorithms that
incorporate machine learning estimators to estimate the nuisance parameters
involved in the estimation of a target parameter of interest
\citep{hernan2016using,hernan2020whatif,vanderlaanTargetedLearningData2018}.
The bootstrap is a standard approach in cases where (asymptotic)
formulas for standard errors do not exist or are not implemented. But
even if there exists an asymptotic formula for constructing confidence
intervals, one may wish to supplement the analysis with bootstrap
confidence intervals if the validity of the estimator of
the formula-based standard error depends on the correct specification of
the nuisance parameter models \citep{chiu2023evaluating}. However, the
computational burden of the standard bootstrap algorithms increases
with the sample size.

Methods for constructing bootstrap confidence intervals often use empirical quantiles
of a bootstrapped statistic. 
Popular choices include the percentile bootstrap and
the bootstrap-\(t\) confidence interval \citep{tibshirani1993introduction}.
It is recommended these methods be run with a minimum of 1000 bootstrap replications \citep{efronbootstrapchoice}.
Others are based on the standard error using the bootstrap samples.
According to \cite{efronbootstrapchoice}, performing between 25 and 100 bootstrap replications is sufficient for stability in standard error-based bootstrap confidence intervals.

When bootstrap samples are drawn
with replacement,
complications can arise if one of the subroutines is
sensitive to duplicate observations in the data
\citep{bickelResamplingFewerObservations1997}. This is the case, for
example, if cross-validation is used to tune hyperparameters of
machine learning algorithms for nuisance parameters.
Cross-validation is a means of evaluating the performance of a model on
unseen data by splitting the data into independent training and test sets.
However, if we first apply the bootstrap with replacement and then cross-validation, the same observation may
be present in both the training and test sets (see Figure \ref{fig:bootstrap_problem}).
This violates the independence between the training and test sets in the
cross-validation procedure, which may lead to a biased estimate of the out-of-sample error.

\begin{figure}[!ht]
\centering
\includegraphics[width=.8\textwidth]{./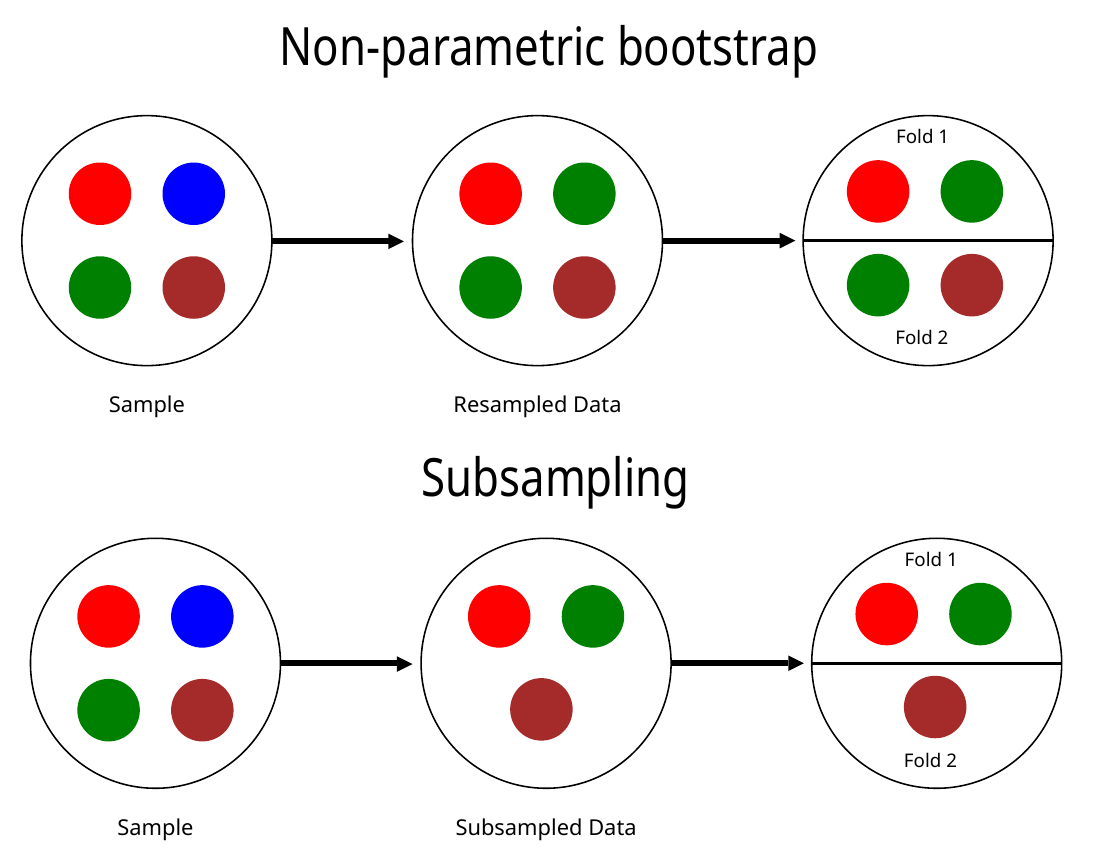}
\caption{\label{fig:bootstrap_problem}Illustration of the problem with ties in the data when using bootstrap with replacement and the subsampling bootstrap. First, a bootstrap sample is drawn with and without replacement from the data set. The data set is then split into two folds for the cross-validation procedure. We see that there is one observation present in both folds for the bootstrap sample drawn with replacement. Conversely, the subsample does not have this issue.}
\end{figure}

In this article, we propose a Cheap Subsampling bootstrap algorithm
that samples without replacement to obtain bootstrap data sets that
are smaller than the original data set.
Our algorithm and formula are based on the Cheap Bootstrap confidence
interval \citep{lamCheapBootstrapMethod2022}. Note that
\citep{lamCheapBootstrapMethod2022} also discusses subsampling but with
replacement. Their approach may have theoretical advantages over
subsampling without replacement
\citep{bickelResamplingFewerObservations1997}, but our approach
is compatible with cross-validation and other methods that are
sensitive to ties.

The consistency of subsampling, which we need for the validity of the
Cheap Subsampling confidence interval, has been derived under the
assumption that the asymptotic distribution of the estimator of
interest exists \citep{politis1994}. Consistency has also been derived
under the assumption that the estimator of interest is asymptotically
linear \citep{wuJackknife1990}. In contrast, non-parametric bootstrapping 
(drawing a bootstrap sample of size \(n\) from the data set with replacement)
is known to fail in various theoretical settings
\citep{bickelResamplingFewerObservations1997}.  Consistency of
subsampling requires that the subsample size is chosen correctly as a
function of the sample size. The results are asymptotic and do not
provide a method for selecting the subsample size in practice.  Here we
apply the conditions in \cite{wuJackknife1990} to show the asymptotic
validity of the Cheap Subsampling confidence interval for
asymptotically linear estimators (Theorem \ref{thm:cs_validity}). In
addition, we show that the Cheap Subsampling confidence interval
converges to a confidence interval based on a delete-\(d\) jackknife
variance estimator \citep{jackknifegeneraltheory} as the number of
bootstrap repetitions increases. In the limit, our confidence interval
is valid for any number of bootstrap replications.

We demonstrate the use of our method with an application in causal
inference in the LEADER trial \citep{LEADER}, which investigates the
effects of liraglutide on cardiovascular outcomes in patients with
type 2 diabetes. The overall goal is to estimate the causal effect of
staying on a treatment which we estimate with a longitudinal targeted
minimum loss-based estimator \citep{laanTargetedMinimumLoss2012}.
Bootstrap inference for the longitudinal targeted minimum loss-based
estimator is of general interest and in particular useful in cases
where estimates of the standard error are not reliable
\citep{tranRobustVarianceEstimation2023,vanderLaanHALUndersmooth2023}.

The remainder of the article is organized as follows. In Section \ref{cheap_sub},
we introduce the Cheap Subsampling algorithm and
formulate the conditions for the asymptotic validity of the Cheap
Subsampling confidence interval and its connection to an asymptotic confidence interval
based on the delete-\(d\) jackknife variance estimator. In Section \ref{example_ltmle}, we
apply the Cheap Subsampling confidence interval to the LEADER trial
data. In Section \ref{sim_ltmle}, we present a simulation study to
investigate the performance of the Cheap Subsampling confidence
interval.
\section{Cheap Subsampling bootstrap}
\label{cheap_sub}
\subsection{Notation and framework}
\label{framework}
We introduce our Cheap Subsampling algorithm in a general framework
that includes the applied settings. Let \(\data = (O_1, \dots, O_n)\)
with \(O_i \in \mathbb{R}^d\) be a data set of independent and
identically distributed random variables sampled from some unknown
probability measure \(P \in \mathcal{P}\). Here \(\mathcal{P}\)
denotes a suitably large set of probability measures on
\(\mathbb{R}^d\). We denote by \(\Psi : \mathcal{P} \rightarrow
\mathbb{R}\) the statistical functional of interest and by \(\est\) an
estimator of \(\estimand\) based on \(\data\).  The estimator
\(\hat{\Psi}_n\) is asymptotically linear \citep{bickel1993efficient} if
\begin{equation*}
\est-\estimand = \frac{1}{n} \sum_{i=1}^n \phi_P(O_i) + R_n(P)
\end{equation*}
where \(\phi_P : \mathbb{R}^d \rightarrow \mathbb{R}\) is a measurable
function with \(\mathbb{E}_P[\phi_P(O)] = 0\),
\(0<\mathbb{E}_P[\phi_P(O)^2] < \infty\), and the remainder term
fulfills \(R_n(P) = o_P(\frac{1}{\sqrt{n}})\) for all \(P \in
\mathcal{P}\). A subsample \(\mathcal{D}^*_m= (O_1^*, \dots, O_m^*)\)
is a diminished data set obtained by drawing \(m < n\) observations
without replacement from the data set \(\data\). We denote by
\(\bootest\) the estimate based on the subsample \(\mathcal{D}^*_m\).
\subsection{Cheap Subsampling confidence interval}

We aim to construct confidence intervals for \(\Psi(P)\) based on the
estimator \(\est\) and \(B\geq 1\) subsamples of size \(m\).  By
repeating the subsampling procedure independently \(B \geq 1\) times
we obtain the estimates based on subsamples \(\{\hat{\Psi}_{(m,1)}^*, \dots,
\hat{\Psi}_{(m,B)}^*\}\) and define the Cheap Subsampling confidence
interval as
\begin{equation}
\csint = \left(\est-t_{B,1-\alpha/2} \sqrt{\frac{m}{n-m}} S, \est+t_{B,1-\alpha/2} \sqrt{\frac{m}{n-m}} S \right), \label{eq:cs_ci}
\end{equation}
where \(t_{B,1-\alpha/2}\) is the \(1-\alpha/2\) quantile of a \(t\)-distribution with \(B\) degrees of freedom and \(S= \sqrt{\frac{1}{B} \sum_{b=1}^B
\left(\hat{\Psi}_{(m,b)}^*-\est \right)^2}\). 

The asymptotic validity of the Cheap Subsampling confidence interval
\eqref{eq:cs_ci} can be shown under the assumption that \(\est\) is
asymptotically normal and that the subsample size \(m(n)\) is chosen
appropriately as a function of the sample size \(n\).  To ease the
notation, we will just write \(m\) instead of \(m(n)\) in what
follows. Sufficient conditions for asymptotic validity are formulated
in Theorem \ref{thm:cs_validity}.

\begin{theorem}
Let \(\est\) be an asymptotically linear estimator of \(\estimand\).
If the subsample size \(m\) fulfills that \(\sup_{n \in \mathbb{N}} \subpercentage \leq c\) for some \(0<c<1\),
then the coverage probability of the Cheap Subsampling confidence interval \(\csint\) converges to \(1-\alpha\), that is,
\begin{equation*}
P(\estimand \in \csint) \rightarrow 1-\alpha,
\end{equation*}
as \(m,n \rightarrow \infty\) for any \(B \geq 1\).
\label{thm:cs_validity}
\end{theorem}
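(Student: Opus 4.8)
The plan is to rewrite the coverage event as a two-sided bound on a studentized pivot and to show that this pivot converges to a $t_B$ law. Writing $\sigma^2 = \mathbb{E}_P[\phi_P(O)^2]$, observe that $\estimand \in \csint$ holds exactly when $|T_n| \le t_{B,1-\alpha/2}$, where $T_n = (\est-\estimand)/(\sqrt{\frac{m}{n-m}}\,S)$. Setting $A_n = \sqrt{n}(\est-\estimand)$ and $C_{n,b} = \subsamplingfactor(\hat{\Psi}^*_{(m,b)}-\est)$, the common factor $n^{-1/2}$ cancels between numerator and denominator, giving
\[
T_n = \frac{A_n}{\sqrt{\frac{1}{B}\sum_{b=1}^B C_{n,b}^2}}.
\]
Hence it suffices to prove the joint convergence $(A_n, C_{n,1}, \dots, C_{n,B}) \weakly (\sigma Z_0, \sigma Z_1, \dots, \sigma Z_B)$ for i.i.d.\ standard normal $Z_0,\dots,Z_B$; the continuous mapping theorem (the limiting denominator $\frac1B\sum_b Z_b^2$ being almost surely positive) then yields $T_n \weakly Z_0/\sqrt{\frac1B\sum_b Z_b^2} \sim t_B$, and continuity of the $t_B$ distribution function gives $P(|T_n|\le t_{B,1-\alpha/2}) \to 1-\alpha$.

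Next I would linearize. Write $\xi_i = \phi_P(O_i)$ and let $W_{b,i} = \mathbf{1}\{i \text{ is drawn into subsample } b\}$. Asymptotic linearity of $\est$, and of each $\hat{\Psi}^*_{(m,b)}$ (legitimate because a subsample drawn without replacement is marginally an i.i.d.\ sample of size $m$ from $P$), give
\[
A_n = \frac{1}{\sqrt{n}}\sum_{i=1}^n \xi_i + o_P(1), \qquad C_{n,b} = \subsamplingfactor\sum_{i=1}^n\Big(\frac{W_{b,i}}{m}-\frac{1}{n}\Big)\xi_i + o_P(1).
\]
The two scaled remainders are $o_P(1)$ precisely because $\sup_n \subpercentage \le c<1$ keeps $\frac{n}{n-m}\le\frac{1}{1-c}$ and $\frac{m}{n-m}\le\frac{c}{1-c}$ bounded, so by Slutsky it suffices to handle the linear parts. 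A short second-moment computation using $\var(W_{b,i})=\frac{m}{n}(1-\frac{m}{n})$ for sampling without replacement confirms $\var(C_{n,b})\to\sigma^2$ — which is exactly what pins down the scaling \subsamplingfactor\ — while the independence of distinct subsamples together with $\mathbb{E}[\frac{W_{b,i}}{m}-\frac1n]=0$ give $\mathrm{Cov}(C_{n,b},C_{n,b'})=0$ for $b\ne b'$ and $\mathrm{Cov}(A_n,C_{n,b})=0$, identifying the candidate limiting covariance as $\sigma^2$ times the identity.

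The technical heart is to upgrade these moment facts to a genuine joint limit, which I would do by conditioning on the data $\data$. Given $\data$, the $B$ subsamples are drawn independently, so $C_{n,1},\dots,C_{n,B}$ are conditionally independent, and each linear part is a centered, scaled sample mean of a simple random sample of size $m$ without replacement from the finite population $\{\xi_1,\dots,\xi_n\}$. To these I would apply the H\'ajek finite-population central limit theorem: since the $\xi_i$ are i.i.d.\ with finite variance, $n^{-1}\sum_i(\xi_i-\bar\xi_n)^2 \to \sigma^2$ and $\max_{i\le n}\xi_i^2/n\to 0$ almost surely, and since $\sup_n\subpercentage\le c<1$ with $m\to\infty$ forces both $m\to\infty$ and $n-m\ge(1-c)n\to\infty$, its hypotheses hold. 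Thus the conditional characteristic function $g_n^{(b)}(t)=\mathbb{E}[\exp(it\,C_{n,b})\given\data]$ converges almost surely to $\exp(-\tfrac12 t^2\sigma^2)$. Because $A_n$ is $\data$-measurable with $A_n\weakly\sigma Z_0$, for any $(t_0,\dots,t_B)$ I would factor
\[
\mathbb{E}\Big[\exp\Big(it_0 A_n + i\sum_{b=1}^B t_b C_{n,b}\Big)\Big] = \mathbb{E}\Big[\exp(it_0 A_n)\prod_{b=1}^B g_n^{(b)}(t_b)\Big],
\]
and, since the product is bounded by $1$ and converges in probability to the constant $\exp(-\tfrac12\sigma^2\sum_{b\ge1}t_b^2)$, a bounded-convergence/Slutsky argument shows the right-hand side tends to $\mathbb{E}[\exp(it_0\sigma Z_0)]\exp(-\tfrac12\sigma^2\sum_{b\ge1}t_b^2)=\prod_{b=0}^B\exp(-\tfrac12\sigma^2 t_b^2)$. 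By L\'evy's continuity theorem this is the desired joint Gaussian limit, and the reduction in the first paragraph completes the proof.

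The step I expect to be the main obstacle is this conditional finite-population CLT. One must check its negligibility condition for the random population $\{\xi_i\}$ — supplied by the i.i.d.\ finite-variance maximal bound $\max_i\xi_i^2/n\to 0$ — and, above all, ensure $n-m\to\infty$, which is exactly the content of the upper bound $\sup_n\subpercentage\le c<1$; were $m$ allowed to approach $n$, the without-replacement subsample would exhaust the data and the normalized subsample mean would degenerate rather than converge to $\normal$. The remaining pieces, namely the linearization bookkeeping and the passage from conditional to unconditional limits, are routine once this CLT and the conditional independence of the subsamples are available.
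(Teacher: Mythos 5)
Your proposal is correct in outline and reaches the same pivot-based conclusion as the paper, but it substitutes a self-contained derivation for the paper's one external ingredient. The paper's proof reduces, exactly as you do, to the joint convergence of $(\sqrt{n}(\est-\estimand), \sqrt{k_{(m,n)}}(\hat{\Psi}^*_{(m,1)}-\est),\dots,\sqrt{k_{(m,n)}}(\hat{\Psi}^*_{(m,B)}-\est))$ to i.i.d.\ $\mathcal{N}(0,\sigma^2)$ coordinates, obtained by conditioning on $\data$, using conditional independence of the $B$ subsamples, and passing to the limit by dominated convergence (the paper manipulates joint distribution functions where you use characteristic functions --- an immaterial difference), followed by the continuous mapping theorem to identify the $t_B$ limit of the studentized pivot. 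The genuine point of departure is the conditional central limit theorem for $\subsamplingfactor(\bootest-\est)$ given $\data$: the paper invokes Theorem 2(iii) of Wu (1990) as a black box, checking only its hypothesis $\frac{n-m}{n}\geq 1-c>0$, whereas you re-derive it by linearizing both $\est$ and $\bootest$ (with the two remainders killed by exactly the boundedness of $\frac{m}{n-m}$ and $\frac{n}{n-m}$ that $\sup_n m/n\leq c$ provides) and applying the H\'ajek finite-population CLT to the simple random sample drawn from $\{\phi_P(O_1),\dots,\phi_P(O_n)\}$. This is essentially what Wu's result encapsulates, so your route buys self-containedness and makes visible where each hypothesis is used, at the cost of length; the paper's buys brevity at the cost of a citation.

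One detail to tighten: the negligibility condition you propose to verify, $\max_{i\leq n}\phi_P(O_i)^2/n\to 0$ a.s., suffices for the H\'ajek CLT only when $m$ grows proportionally to $n$. The relevant Lindeberg quantity truncates at a level of order $\sigma\sqrt{m(n-m)/n}\asymp\sqrt{m}$, so when $m/n\to 0$ you should instead verify that $\frac{1}{n}\sum_{i\leq n}\phi_P(O_i)^2\mathbf{1}\{|\phi_P(O_i)|>\tau\sqrt{m}\}\to 0$ a.s.\ for every $\tau>0$, which follows from $\mathbb{E}_P[\phi_P(O)^2]<\infty$ together with $m\to\infty$ by first truncating at a fixed level $K$ and then letting $K\to\infty$. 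This is a fixable technicality in the verification of the CLT's hypotheses, not a flaw in the strategy.
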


\begin{proof}
The proof is given in Appendix \ref{pf_thm1}.
\end{proof}

\begin{remark}
Other choices of the subsample size may be of interest such as \(m/n \rightarrow 1\) and \(n-m \rightarrow \infty\) as \(n \rightarrow \infty\) \citep{wuJackknife1990},
but would require conditions on the remainder term \(R_n\) 
that are too restrictive for our purposes.
\end{remark}

Next, we state a result that shows that the endpoints of the Cheap Subsampling confidence interval converge to
a random limit fully determined by the data \(\data\) as the number of bootstrap repetitions increases (Theorem \ref{thm:choose_b}).
Specifically, the theorem has the consequence that the endpoints of the Cheap Subsampling confidence interval \eqref{eq:cs_ci}
converge to the endpoints of an (asymptotic) confidence interval based on the delete-\((n-m)\) jackknife variance estimator for the variance as \(B \rightarrow \infty\).
The delete-\((n-m)\) jackknife variance estimator \citep{jackknifegeneraltheory} is given by
\begin{equation*}
\widehat{\var}_{\textnormal{jack}} = \frac{m}{n-m} \mathbb{E}_P[(\hat{\Psi}^*_m-\est)^2| \data]. 
\end{equation*}
If the condition of Theorem \ref{thm:choose_b} is fulfilled, we have 
\begin{equation*}
\est \pm t_{B,1-\alpha/2} \sqrt{\frac{m}{n-m}} S \inprob \est \pm q_{1-\alpha/2} \sqrt{\widehat{\var}_{\textnormal{jack}}}, \text{ as } B \rightarrow \infty,
\end{equation*}
where \(q_{1-\alpha/2}\) is the \(1-\alpha/2\) quantile of the standard normal distribution.  

\begin{theorem}
Let \(\est\) be any estimator. If \(\mathbb{E}_P[ \hat{\Psi}^4_n] < \infty\), then
\begin{equation*}
S^2 =  {\frac{1}{B} \sum_{b=1}^B
\left(\hat{\Psi}_{(m,b)}^*-\est \right)^2} \inprob \mathbb{E}_P[(\hat{\Psi}^*_m-\est)^2| \data], 
\end{equation*}
as \(B \rightarrow \infty\) for fixed \(m\) and \(n\).
\label{thm:choose_b}
\end{theorem}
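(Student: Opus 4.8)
The plan is to exploit the fact that, with $m$ and $n$ held fixed, the Cheap Subsampling procedure generates the $B$ subsamples independently. I would condition on the observed data $\data$ throughout. Given $\data$, the estimate $\est$ is a constant and the $B$ subsamples are drawn independently and identically, so the summands $(\hat{\Psi}_{(m,b)}^* - \est)^2$, $b = 1, \dots, B$, are i.i.d.\ conditionally on $\data$, each with conditional mean $\mathbb{E}_P[(\bootest - \est)^2 \given \data]$. Hence $S^2$ is the sample average of conditionally i.i.d.\ terms and is conditionally unbiased, $\mathbb{E}[S^2 \given \data] = \mathbb{E}_P[(\bootest - \est)^2 \given \data]$, so the claim reduces to a law of large numbers for this conditional sample mean as $B \to \infty$.

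First I would control the conditional variance. By conditional independence, $\var(S^2 \given \data) = B^{-1}\var((\bootest - \est)^2 \given \data) \le B^{-1}\,\mathbb{E}_P[(\bootest - \est)^4 \given \data]$. Taking expectations and using the tower property then yields $\mathbb{E}_P\big[(S^2 - \mathbb{E}_P[(\bootest - \est)^2 \given \data])^2\big] \le B^{-1}\,\mathbb{E}_P[(\bootest - \est)^4]$. Provided this fourth moment is finite, the right-hand side is $O(B^{-1})$, so $S^2$ converges in $L^2$, and therefore in probability by Markov's inequality, to the random limit $\mathbb{E}_P[(\bootest - \est)^2 \given \data]$, which is exactly the assertion of the theorem.

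The main obstacle is verifying $\mathbb{E}_P[(\bootest - \est)^4] < \infty$, and this is where the hypothesis $\mathbb{E}_P[\hat{\Psi}_n^4] < \infty$ enters. Using the convexity bound $(\bootest - \est)^4 \le 8\big((\bootest)^4 + \est^4\big)$, the term $\mathbb{E}_P[\est^4] = \mathbb{E}_P[\hat{\Psi}_n^4]$ is finite by assumption. For the subsample term I would invoke exchangeability of the i.i.d.\ observations: marginally, the estimate on a uniformly drawn size-$m$ subsample has the same distribution as the estimate on an i.i.d.\ sample of size $m$, so $\mathbb{E}_P[(\bootest)^4]$ equals the fourth moment of the size-$m$ estimator, finite under the same moment condition read at sample size $m$. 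A cleaner alternative that sidesteps any moment bookkeeping in $m$ is to note that for fixed $m,n$ the subsample estimate takes only finitely many values (one per choice of the $\binom{n}{m}$ subsets), so $\mathbb{E}_P[(\bootest - \est)^2 \given \data]$ is almost surely finite; the conditional strong law of large numbers then gives $S^2 \to \mathbb{E}_P[(\bootest - \est)^2 \given \data]$ almost surely given $\data$, and since the conditional probabilities $P(|S^2 - \mathbb{E}_P[(\bootest - \est)^2 \given \data]| > \varepsilon \given \data)$ are bounded by $1$, dominated (bounded) convergence upgrades this to the unconditional convergence in probability, for any fixed $m$ and $n$.
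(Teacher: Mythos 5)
Your main argument is essentially the paper's proof: condition on \(\data\), note the summands are conditionally i.i.d.\ with conditional mean \(\mathbb{E}_P[(\bootest-\est)^2\given\data]\), bound \(\mathbb{E}\bigl[\bigl(S^2-\mathbb{E}_P[(\bootest-\est)^2\given\data]\bigr)^2\bigr]\) by \(B^{-1}\mathbb{E}_P[(\bootest-\est)^4]\) via the tower property, and conclude by Markov/Chebyshev. The only substantive differences are in the details and in one honest improvement. For the fourth-moment bound you use the convexity inequality \((a-b)^4\le 8(a^4+b^4)\) where the paper expands by the binomial theorem and applies H\"older to each cross term; both routes reduce to the same exchangeability fact that \(\bootest\) is marginally distributed as an estimator computed on \(m\) i.i.d.\ observations, and hence both actually need \(\mathbb{E}_P[\hat\Psi_m^4]<\infty\), a condition at sample size \(m\) that the stated hypothesis \(\mathbb{E}_P[\hat\Psi_n^4]<\infty\) does not literally supply. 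You flag this explicitly, which the paper does not. Your proposed workaround is a genuinely different and arguably cleaner argument: for fixed \(m,n\) the subsample estimate takes at most \(\binom{n}{m}\) values given \(\data\), so all conditional moments are automatically finite almost surely; the conditional strong law then gives \(S^2\to\mathbb{E}_P[(\bootest-\est)^2\given\data]\) almost surely given \(\data\), and bounded convergence applied to \(P(|S^2-\mathbb{E}_P[(\bootest-\est)^2\given\data]|>\varepsilon\given\data)\le 1\) upgrades this to unconditional convergence in probability. That route dispenses with the moment hypothesis entirely (and yields almost sure conditional convergence rather than only convergence in probability), at the cost of invoking a conditional law of large numbers rather than an explicit \(O(B^{-1})\) variance bound; the paper's Chebyshev argument, by contrast, gives a quantitative rate. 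Both arguments are correct.
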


\begin{proof}
The proof is given in Appendix \ref{pf_thm2}.
\end{proof}
\section{Case study: Application to the LEADER data set}
\label{example_ltmle}
For the sole purpose of illustrating our method, we use the data from
the LEADER trial \citep{LEADER}, where we apply the Cheap Subsampling
confidence interval to the longitudinal targeted minimum loss-based
estimator (LTMLE) \citep{vanderlaanTargetedLearningData2018}. We use a
subset of the data from the LEADER trial, which includes 8652 patients
with type 2 diabetes. The baseline variables selected for analysis
were sex, age, use of thiazide, use of statin, hypertension, BMI, and
the number of years since diabetes diagnosis.  The time-varying data
was discretized into 8 intervals of 6 months length each. In each
interval, we defined time-varying covariates that represent HbA1c, use
of thiazolidinediones, use of sulfonylureas, use of metformin, and use
of DPP-4 inhibitors. For
each time interval, the LTMLE algorithm estimates nuisance regression
models for the outcome, the propensity of treatment, and the censoring
probability. Nuisance parameter estimation for the LTMLE was performed
using a discrete super learner \citep{superlearner}, including learners
based on penalized regression
\citep{tayElasticNet} and random
forests obtained with the \texttt{ranger} algorithm \citep{ranger}. The discrete
super learner used 2-fold cross-validation to choose the
best-fitting model. Throughout, we apply the Cheap Subsampling
algorithm to obtain confidence intervals for the causal effect of
adhering to the placebo regimen and the absolute 4-year risk of
all-cause death. We compare our method with asymptotic confidence
intervals that are based on an estimate of the efficient influence
function \citep{laanTargetedMinimumLoss2012}. We investigate the
Monte Carlo error (effect of setting the random seed), the impact of
the subsample size, and the number of bootstrap repetitions on the
Cheap Subsampling confidence interval.

Figure \ref{fig:confidence_interval_leader} shows the effect of the number of
bootstrap repetitions \(B\) on the Cheap Subsampling confidence
interval. We observe that the Cheap Subsampling confidence interval is
comparable to the asymptotic confidence interval and that a higher
number of bootstrap repetitions results in a more stable confidence
interval. It is seen that the Cheap Subsampling confidence intervals stabilize quickly
and do not change significantly after 10 bootstrap replications. 

Due to the random nature of the bootstrap, the bootstrap confidence
intervals are affected by Monte Carlo error, i.e., they depend on the
random seed set by the algorithm. This means that the confidence
intervals will change when repeating the whole bootstrap procedure
with a new random seed. In Figure \ref{fig:seed}, we illustrate this seed
effect on the upper endpoint of the confidence interval for various
subsample sizes and 10 repetitions of the whole Cheap Subsampling
bootstrap algorithm. This shows that \(B\geq 25\) seems sufficient for
the seed effect to be negligible in our data example. We also see that
in this example the Cheap Bootstrap confidence interval does not
depend on the subsample size with sufficiently many bootstrap
repetitions (\(B \geq 25\)).

\begin{figure}[!ht]
\centering
\includegraphics[width=.9\textwidth]{./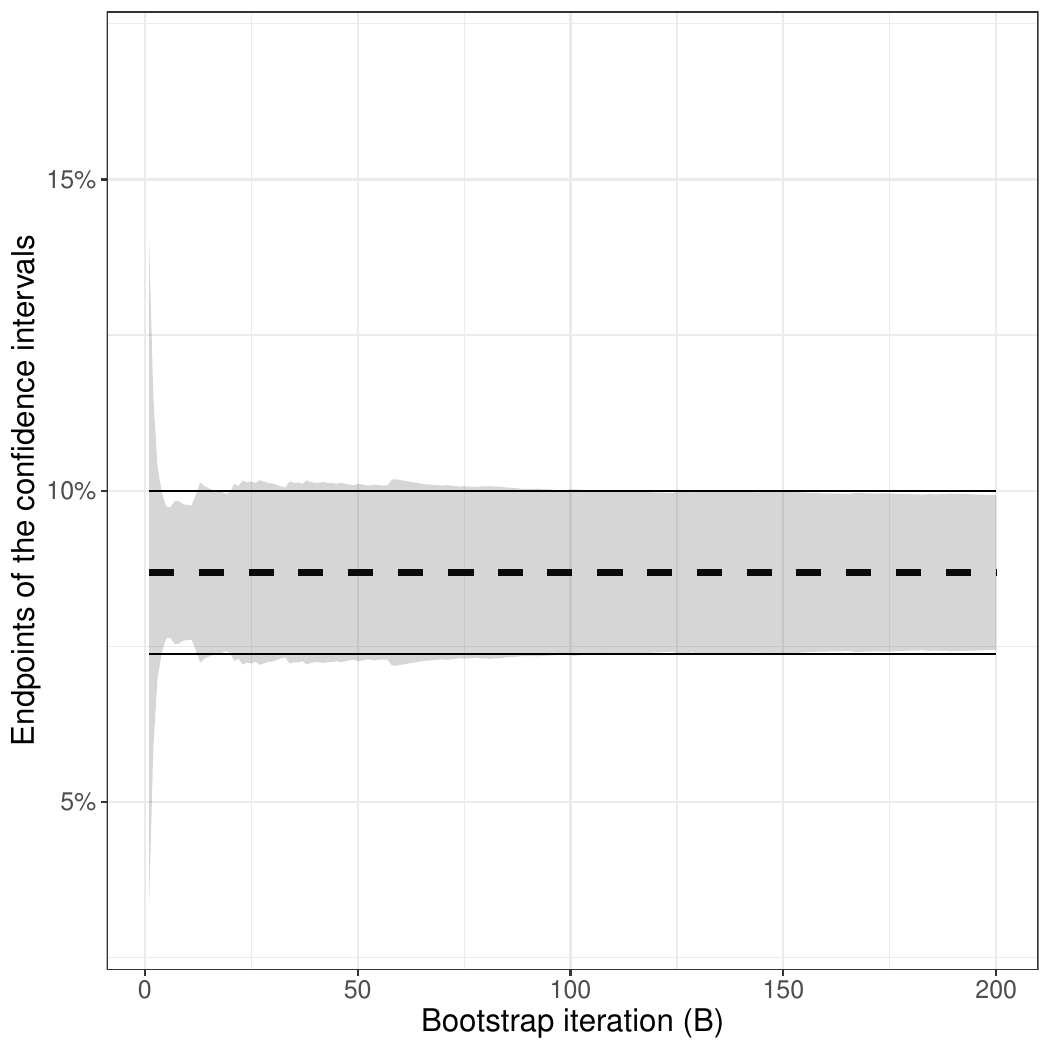}
\caption{\label{fig:confidence_interval_leader}Lower and upper endpoints (y-axis) of 95\% Cheap Subsampling confidence intervals for the absolute risk of dying within 4 years for the placebo regimen in the LEADER trial using the LTMLE. The x-axis shows the number of bootstrap repetitions \(B \in \{1, \dots, 200\}\) for the subsample size \(m = \lfloor 0.8 \cdot 8652 \rfloor = 6850\).  Additionally, the lower and upper endpoints of the asymptotic confidence interval is the black horizontal lines and the point estimate is the dotted line.}
\end{figure}

\begin{figure}[!ht]
\centering
\includegraphics[width=.9\textwidth]{./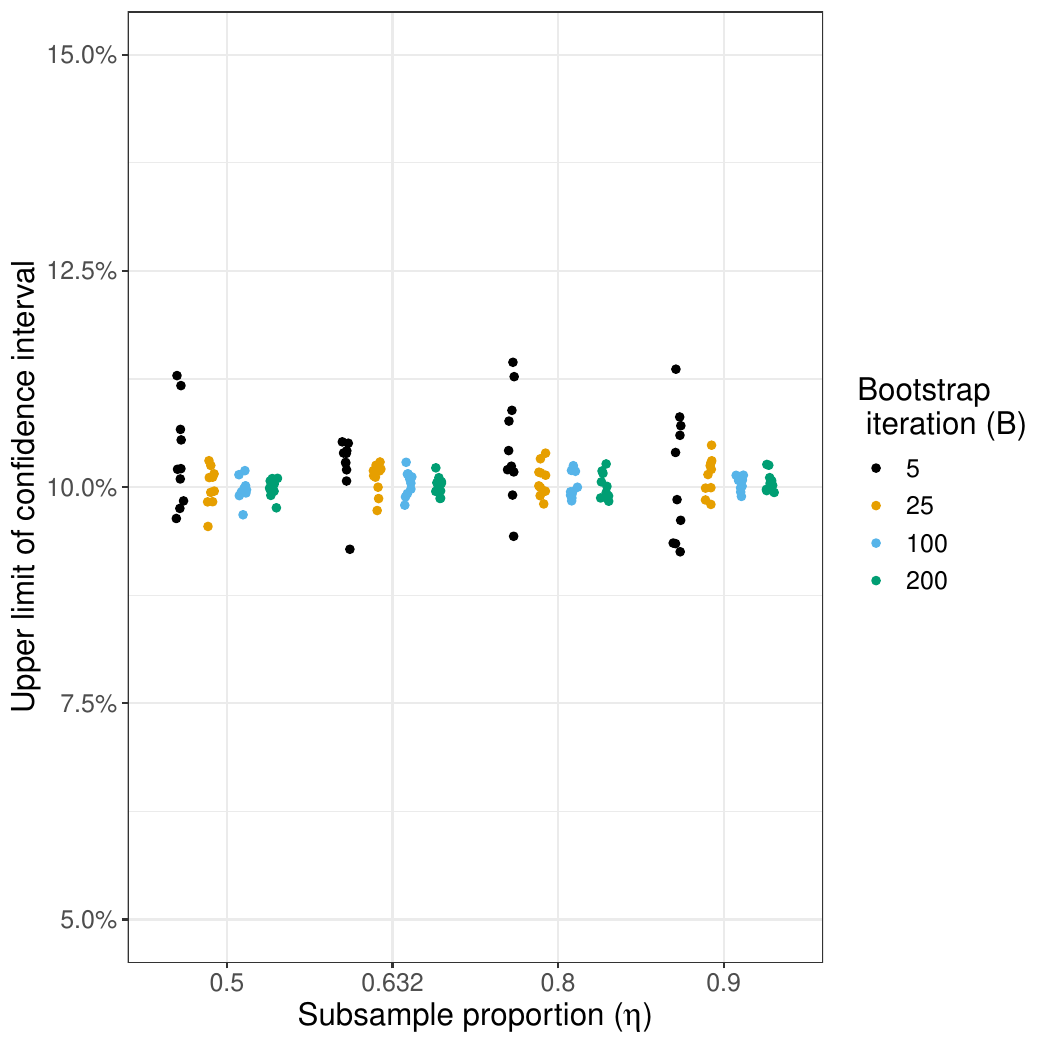}
\caption{\label{fig:seed}The upper endpoint of 95\% Cheap Subsampling confidence interval based on the LTMLE of the absolute risk of dying within 4 years under the placebo regimen in the LEADER trial. The plot shows the Monte Carlo error (random seed effect) based on 10 runs of the Cheap Subsampling algorithm for each of the subsample sizes \(m = \lfloor \eta \cdot 8652 \rfloor\) with \(\eta \in \{0.5, 0.632, 0.8, 0.9\}\) and number of bootstrap repetitions \(B \in \{5, 20,100,200\}\).}
\end{figure}
\section{Simulation study}
\label{sim_ltmle}
In this section, we simulate data to investigate the effects of sample
size, subsample size, and the number of bootstrap repetitions on the
coverage probability and the width of the Cheap Subsampling confidence
interval.  We consider a survival setting with a binary treatment and
a time-to-event outcome and apply the LTMLE algorithm for which we
discretize time into 2 time intervals. In the simulation study, the
target parameter is the absolute risk of an event within the end of
the second time interval under sustained treatment. For details on the
data-generating mechanism, the simulation study, and the \textbf{R} code, see
the supplementary material (Appendix \ref{sec:datagenerating}) and
\url{https://github.com/jsohlendorff/cheap\_subsampling\_simulation\_study}.

In our simulation study, we consider sample sizes \(n \in \{250, 500,
1000, 2000, 8000\}\) and vary the subsample size \(m=\lfloor \eta
\cdot n \rfloor\) with \(\eta \in \{0.5, 0.632, 0.8, 0.9\}\) and the
number of bootstrap repetitions \(B \in \{1, \dots, 500\}\). For each
scenario, we repeat the whole procedure in 2000 simulated data
sets. For the estimation of the nuisance parameters, we use (correctly
specified) logistic regression models.

In each instance, we compute the empirical coverage of the confidence
intervals and the average relative width of the Cheap Subsampling
confidence interval for the LTMLE when compared with the asymptotic
confidence interval which is based on an estimate of the efficient
influence function \cite{laanTargetedMinimumLoss2012}. Additionally, we
compare our Cheap Subsampling confidence interval with the Cheap Bootstrap confidence interval
\citep{lamCheapBootstrapMethod2022}. The results are summarized across
the 2000 simulated data sets in Table \ref{table_coverage} and Figure
\ref{fig:sim_bootstrap}.

\begin{figure}[!ht]
\centering
\includegraphics[width=.9\textwidth]{./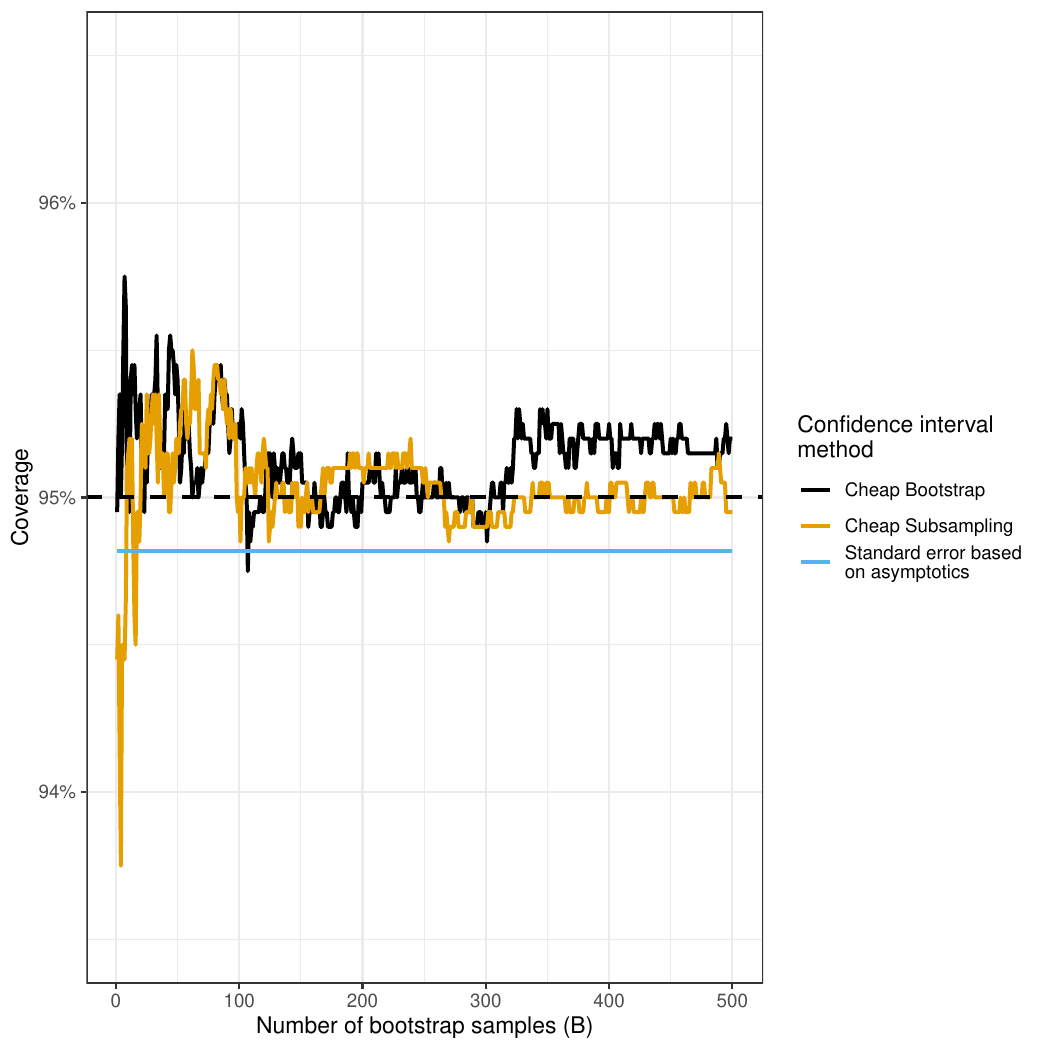}
\caption{\label{fig:sim_bootstrap}Results from the simulation study illustrating the coverage (y-axis) of three 95\% confidence intervals for the absolute risk of an event before the end of the second time interval under sustained exposure using the LTMLE for \(n=2000\). The three confidence intervals are the asymptotic confidence interval, the Cheap Subsampling confidence interval (\(m=\lfloor 0.632 \cdot 2000 \rfloor = 1264\)), and the Cheap Bootstrap confidence interval \citep{lamCheapBootstrapMethod2022}. The x-axis shows the number of bootstrap repetitions \(B \in \{1, \dots, 500\}\).}
\end{figure}

Figure \ref{fig:sim_bootstrap} shows that the coverage is close to the
nominal level for very low numbers of bootstrap repetitions
and fixed subsample size \(m = \lfloor 0.632 \cdot 2000 \rfloor =
1264\). This was guaranteed by Theorem \ref{thm:cs_validity} only in
large samples. When we compare the coverage of the asymptotic
confidence interval with the coverage of the Cheap Bootstrap
confidence interval \citep{lamCheapBootstrapMethod2022}, we see that the Cheap Subsampling confidence
interval has similar coverage, albeit with slightly worse coverage for
very low numbers of bootstrap replications \(B\). Table
\ref{table_coverage} shows no systematic effects on the coverage of the
Cheap Subsampling confidence interval for different subsample sizes in
large sample sizes. However, the coverage appears to depend on the
subsample size when the sample size is small.

For the widths in Table \ref{table_coverage}, we see that the Cheap
Subsampling confidence interval is, in general, slightly wider than
the asymptotic confidence intervals, but that increasing \(B\) results
in narrower confidence intervals.
A possible explanation for the wider Cheap Subsampling confidence intervals at low \(B\) is
that the quantiles of the
\(t\)-distribution are large for low degrees of freedom but quite
comparable to the normal distribution for large degrees of freedom (\(B
\geq 25\)). Similar results were obtained in the case study (Section
\ref{example_ltmle}) for small values of \(B\). Moreover, the width of
the Cheap Subsampling confidence interval slightly decreases
with increasing subsample size and sample size when compared to the asymptotic
confidence interval, but this effect is less noticeable than the
effect of the number of bootstrap repetitions.

\begin{table}[!ht]
\caption{\label{table_coverage}The table shows the coverage and relative widths compared to the asymptotic confidence interval of the 95\% Cheap Subsampling confidence interval for the absolute risk of an event before the end of the second time interval under sustained exposure using the LTMLE for different subsample percentages \(\eta\) and sample sizes \(n\) and the number of bootstrap repetitions $B$. }
\centering
\begin{tabular}{rr|rrrr|rrrr} \toprule
& \multicolumn{1}{c}{} & \multicolumn{4}{c}{Coverage (\%)} & \multicolumn{4}{c}{Relative width (\%)}  \\
\cline{3-10}
& & \multicolumn{4}{c|}{Subsample proportion ($\eta$)} & \multicolumn{4}{c}{Subsample proportion ($\eta$)} \\
$B$ & $n$ & 50\% & 63.2\% & 80\% & 90\% & 50\% & 63.2\% & 80\% & 90\% \\
  \midrule
   5 & 250 & 94.8 & 93.2 & 92.9 & 93.7 & 131.0 & 127.5 & 127.5 & 127.2 \\ 
    & 500 & 94.2 & 93.8 & 95.0 & 94.5 & 126.3 & 126.0 & 126.9 & 125.8 \\ 
    & 1000 & 94.0 & 95.0 & 94.2 & 94.7 & 125.1 & 125.8 & 124.9 & 125.3 \\ 
    & 2000 & 94.8 & 95.1 & 95.0 & 94.5 & 125.6 & 126.2 & 123.5 & 124.8 \\ 
    & 8000 & 95.3 & 95.8 & 94.5 & 95.3 & 125.8 & 127.1 & 124.8 & 124.0 \\
   \midrule
  25 & 250 & 93.6 & 92.5 & 93.2 & 92.2 & 107.9 & 106.0 & 106.1 & 106.3 \\ 
   & 500 & 94.2 & 93.8 & 94.3 & 95.1 & 105.8 & 104.9 & 104.8 & 104.1 \\ 
   & 1000 & 94.5 & 95.2 & 94.2 & 94.3 & 104.3 & 104.8 & 104.5 & 104.1 \\ 
   & 2000 & 94.2 & 95.0 & 95.3 & 95.0 & 104.9 & 105.2 & 104.2 & 104.3 \\ 
   & 8000 & 94.7 & 95.3 & 95.3 & 94.5 & 104.2 & 104.8 & 104.0 & 103.6 \\ 
   \midrule
  100 & 250 & 93.8 & 92.7 & 93.2 & 92.5 & 104.8 & 103.5 & 103.2 & 103.0 \\ 
   & 500 & 94.2 & 93.8 & 94.9 & 95.0 & 102.5 & 102.2 & 101.8 & 101.7 \\ 
   & 1000 & 94.9 & 94.8 & 94.6 & 93.8 & 101.5 & 101.5 & 101.4 & 101.1 \\ 
   & 2000 & 94.2 & 94.6 & 94.8 & 95.0 & 101.5 & 101.3 & 101.0 & 101.1 \\ 
   & 8000 & 94.5 & 95.2 & 95.0 & 95.4 & 101.0 & 101.3 & 100.7 & 101.0 \\
   \midrule
  500 & 250 & 93.9 & 92.8 & 93.0 & 92.7 & 103.9 & 102.8 & 102.2 & 102.1 \\ 
   & 500 & 93.8 & 93.8 & 94.7 & 95.1 & 101.7 & 101.3 & 101.1 & 101.0 \\ 
   & 1000 & 94.8 & 94.8 & 94.5 & 94.0 & 100.8 & 100.7 & 100.6 & 100.5 \\ 
   & 2000 & 94.0 & 94.5 & 94.8 & 95.0 & 100.6 & 100.5 & 100.3 & 100.3 \\ 
   & 8000 & 94.6 & 94.9 & 95.0 & 94.7 & 100.2 & 100.3 & 100.2 & 100.2 \\
    \bottomrule
\end{tabular}
\end{table}
\section{Discussion}

The Cheap Subsampling confidence interval is a valuable tool for
applied research where computational efficiency is needed. We have
shown that it provides asymptotically valid confidence intervals and
investigated the real world and small sample performance for a target
parameter in a semiparametric causal inference setting. The Cheap
Subsampling confidence interval is easy to implement and can be
applied to any asymptotically linear estimator. Theoretically, the
method can be applied already with very few bootstrap
repetitions. But, in our case study, the Monte Carlo error may not be
regarded as negligible for \(B<25\). This is similar to the suggestion
given by \cite{efronbootstrapchoice}. This is likely due to \(\frac{m n}{n-m}S^2\)
being a Monte Carlo bootstrap estimator of the asymptotic variance. 

Our empirical study shows that the coverage of the Cheap Subsampling
confidence interval is more sensitive to the subsample sizes in small
data sets. In these situations, we need to choose the subsample size
carefully to ensure correct coverage. \cite{politisSubsampling1999} and
\cite{bickel2008choice} provide methods for adaptively selecting the
subsample size. For example, one may want to conduct a Monte Carlo
experiment by selecting from a list of subsample sizes \(m_1, \dots,
m_K\) the one that gives the best apparent coverage. The most notable
issue with these approaches is the computational burden. In future
work, we will investigate the possibility of adapting these methods to
choosing the subsample size in practice.

With large data sets, such as those found in electronic health
records, there is also the possibility of using the Bag of Little
Bootstraps \citep{kleiner2012scalablebootstrapmassivedata} or the Cheap
Bag of Little Bootstraps \citep{lamCheapBootstrapMethod2022}.  The idea
behind these methods is to avoid the tuning of the subsample size but
to retain computational feasibility by estimating on smaller data
sets. Another advantage over the Cheap Subsampling confidence interval
is that the confidence intervals based on the Bag of Little Bootstraps
are second-order accurate, likely resulting in narrower confidence
intervals.  \cite{lamCheapBootstrapMethod2022} showed that the Cheap Bootstrap
based on resampling yields confidence intervals that are
second-order accurate.  In future work, we will investigate if the
Cheap Subsampling bootstrap confidence interval can be made
second-order accurate, e.g., by using interpolation
\citep{bertail1997interpolation} and extrapolation
\citep{politis2001extrapolation}. On the other hand, the Bag of Little
Bootstraps sample with replacement and hence
these methods suffer from the problem illustrated in Figure
\ref{fig:bootstrap_problem}.  

In our application of the Cheap Subsampling confidence intervals,
we chose to study the finite-sample properties with the TMLE, but
there are also other choices given by \cite{tranRobustVarianceEstimation2023,Coyle2018}.
Both of these approaches provide valid confidence intervals for the TMLE
that adequately deal with the issue with cross-validation.
The method in \cite{tranRobustVarianceEstimation2023} also reduces the computation time for bootstrapping
by only needing to estimate the nuisance parameters once in the entire sample. 
However, these approaches are specifically designed for the TMLE and may not be applicable to other estimators.
\subsection{Acknowledgments}
The authors would like to thank Novo Nordisk for providing the data from the LEADER trial.
\subsection{Funding}
This work is part of the REDDIE study and has been partially funded by the European Union. 
Views and opinions expressed are however those of the author(s) only and do not necessarily reflect those of the European Union or European Health and Digital Executive Agency (HADEA).
Neither the European Union nor the granting authority can be held responsible for them. This work has received funding from the UK research and Innovation under contract number 101095556.
\subsection{Conflicts of interest}
None declared. 
\subsection{Data availability}
Data sharing cannot be provided due to the proprietary nature of the data.

\bibliographystyle{chicago}
\bibliography{references}

\appendix

\section{Proof of Theorem 1}
\label{pf_thm1}
To prove the theorem, we use the notation and framework of Section
\ref{framework}.  Since the estimator \(\est\) is asymptotically linear,
Slutsky's theorem and the central limit theorem yield
\begin{equation*}
\sqrt{n} (\est-\estimand) \weakly Z \distequal \normal 
\end{equation*}
where \(\sigma^2 = \mathbb{E}_P[\phi_P(O)^2] > 0\). 
Theorem 2 (iii) of \cite{wuJackknife1990} gives that
\begin{equation*}
\sup_{x \in \mathbb{R}} \left|P\left(\subsamplingfactor(\bootest-\est) \leq x | \data \right) - \Phi_{\sigma^2}(x) \right| \inprob 0, 
\end{equation*}
if \(\frac{n-m}{n} > \lambda\) for some \(\lambda > 0\) for all \(n \in \mathbb{N}\).
By our assumption on the subsample size, we have \(\subpercentage \leq c\) for some \(0<c<1\), and thus \(\frac{n-m}{n} = 1-\subpercentage \geq \lambda := 1-c > 0\).
This implies
\begin{equation*}
P\left(\subsamplingfactor(\bootest-\est) \leq x | \data \right) \inprob \Phi_{\sigma^2}(x),  
\end{equation*}
for all \(x \in \mathbb{R}\) as \(m, n \rightarrow \infty\), where
\(\Phi_{\sigma^2}\) is the cumulative distribution function of a
Normal distribution with mean 0 and variance \(\sigma^2\). The
remainder of the proof follows along the steps of the proof of Theorem
1 in \cite{lamCheapBootstrapMethod2022}. Let \(k_{(m,n)} =
\frac{mn}{n-m}\). We want to show that
\begin{equation}
(\sqrt{n}(\est-\estimand), \sqrt{k_{(m,n)}}(\hat{\Psi}^*_{(m,1)}- \est), \dots, \sqrt{k_{(m,n)}}(\hat{\Psi}^*_{(m,B)}- \est)) \weakly (Z_0, \dots, Z_B), \label{eq:jointbootstrap}
\end{equation}
where \(Z_0, \dots, Z_B\) are independent and identically distributed with \(Z_b \distequal Z\).
By conditioning on \(\data\) and using conditional independence of \(\sqrt{k_{(m,n)}}(\hat{\Psi}_{(m,b)}^*- \est), b = 1, \dots, B\)  (bootstrap samples are drawn independently given the data), we have
\begin{align}
	&\Big|P\left(\sqrt{n}(\est-\estimand) \leq z_0,\sqrt{k_{(m,n)}}(\hat{\Psi}^*_{(m,1)}-\est)\leq z_1, \dots, \sqrt{k_{(m,n)}}(\hat{\Psi}^*_{(m,B)}-\est) \leq z_B\right) \tag*{} \\
	&\quad\quad\quad-\prod_{b=0}^{B}\Phi_{\sigma^2}(z_b) \Big| \tag*{}\\
	&\leq \mathbb{E}\left[I\left(\sqrt{n}(\est-\estimand) \leq z_0\right) \left| \prod_{b=1}^B P\left(\sqrt{k_{(m,n)}}(\hat{\Psi}^*_{(m,b)}-\est)\leq z_b | \data\right) -\prod_{b=1}^B \Phi_{\sigma^2}(z_b)\right|\right] \label{eq:jointbootstrapeq1} \\
	&+ \left|\prod_{b=1}^B \Phi_{\sigma^2}(z_b) \left[P\left(\sqrt{n}(\est-\estimand)\leq z_0\right)  - \Phi_{\sigma^2}(z_0)\right]\right|, \label{eq:jointbootstrapeq2}
\end{align}
for any \(z=(z_0, \dots, z_B) \in \mathbb{R}^{B+1}\).  Since, by
assumption, \(\sqrt{n} (\hat\Psi_n-\estimand) \weakly Z\), the term in
equation \eqref{eq:jointbootstrapeq2} converges to zero as \(n
\rightarrow \infty\).  Since also
\(P(\sqrt{k_{(m,n)}}(\hat{\Psi}^*_{(m,b)} - \hat\Psi_n) \leq z |
\data) \inprob \Phi_{\sigma^2}(z)\) for \(b=1, \dots, B\) and all \(z
\in \mathbb{R}\) as \(n \rightarrow \infty\), it follows that the
integrand of the term \eqref{eq:jointbootstrapeq1} converges to zero in
probability as \(n \rightarrow \infty\). Since the integrand in the
term \eqref{eq:jointbootstrapeq1} is bounded by 1, it follows from
dominated convergence that (\ref{eq:jointbootstrapeq1}) tends to
zero. Thus \eqref{eq:jointbootstrap} holds. From this result, we deduce
that
\begin{align*}
T_{(m,n)} &= \frac{\est-\estimand}{\sqrt{\frac{m}{n-m}}S} = \frac{\sqrt{n}(\est-\estimand)}{\sqrt{k_{(m,n)}}S} = \frac{\frac{\sqrt{n}(\est-\estimand)}{\sigma}}{\sqrt{\frac{1}{B} \sum_{b=1}^B \left(\frac{\sqrt{k_{(m,n)}}}{\sigma} (\hat{\Psi}^*_{(m,b)}-\est)\right)^2}} \\
&\weakly \frac{\tilde{Z}_0}{\sqrt{\frac{1}{B} \sum_{b=1}^B \tilde{Z}_b^2}}
\end{align*}
where \(\tilde{Z}_b = Z_b/\sigma \distequal N(0, 1)\).
Note that by the independence of the \(\tilde{Z}_b\)'s and the fact that \(\tilde{Z}_b \distequal N(0,1)\), we have that
\(\frac{\tilde{Z}_0}{\sqrt{\frac{1}{B} \sum_{b=1}^B \tilde{Z}_b^2}}\) has a \(t\)-distribution with \(B\) degrees of freedom.
This shows that \(T_{(m,n)}\) converges in distribution to a \(t\)-distribution with \(B\) degrees of freedom,
as \(n \rightarrow \infty\). Thus, we have
\begin{equation*}
P( \Psi(P) \in \csint) = P(-t_{B,1-\alpha/2} < T_{(m,n)} < t_{B,1-\alpha/2}) \rightarrow 1-\alpha,
\end{equation*}
as \(n \rightarrow \infty\).
\section{Proof of Theorem 2}
\label{pf_thm2}
To prove the theorem, we shall argue that \(\mathbb{E}\left[(\bootest - \est)^4 \right] < \infty\).
An application of Chebyshev's inequality for conditional expectations then provides the means to show the statement of the theorem.

Since \(\mathbb{E}[\hat{\Psi}^4_n] < \infty\), the binomial theorem
gives
\begin{equation*}
\mathbb{E}[(\bootest - \est)^4] = \sum_{i=0}^4 \binom{4}{i} \mathbb{E}[\hat{\Psi}^{4-i}_n (\hat{\Psi}^{*}_{m})^i].
\end{equation*}
Applying Hölder's inequality with \(p = 4/i\) and \(q = 4/(4-i)\), we
have \(|\mathbb{E}[\hat{\Psi}^{4-i}_n \hat{\Psi}^{*,i}_{m}]| \leq
(\mathbb{E}[\hat{\Psi}^{4}_n])^{1/p}
\mathbb{E}[(\hat{\Psi}^{*}_{m})^4]^{1/q} =
(\mathbb{E}[\hat{\Psi}^{4}_n])^{1/p}
\mathbb{E}[(\hat{\Psi}_{m})^4]^{1/q}\), where the latter equality
follows from the fact that the subsample has marginally the same
distribution as a full sample of \(m\) observations.  
Moreover, since \(\mathbb{E}[(\bootest - \est)^2 | \data] =
\argmin_{g} \mathbb{E}[(\bootest - \est)^2 - g(\data)]^2\), where the
minimum is taken over all \(\data\)-measurable functions \(g\), we
have that
\begin{equation}
\mathbb{E}\left[\left(\bootest - \est)^2 - \mathbb{E}[(\bootest - \est)^2 | \data]\right)^2 \right] \leq \mathbb{E}\left[(\bootest - \est)^4 \right] < \infty. \label{eq:varbound}
\end{equation}

By using that conditionally on \(\data\), the random variables \(\hat{\Psi}^*_{(m,1)}, \dots, \hat{\Psi}^*_{(m,B)}\) are independent, 
we have by Chebyshev's inequality for conditional expectations, for arbitrary \(\varepsilon > 0\), that
\begin{equation*}
P\left( \left| \frac{1}{B} \sum_{b=1}^B \left( \hat{\Psi}_{(m,b)} - \est \right)^2 - \mathbb{E}[ (\bootest - \est)^2 | \data ] \right| \geq \varepsilon \bigg| \data \right) \leq \frac{1}{B^2 \varepsilon^2} \sum_{b=1}^B \var[(\hat{\Psi}^*_{(m,b)} - \est)^2 | \data].
\end{equation*}
Taking the expectation on both sides of the previous display, we have
\begin{align}
& \quad  P\left( \left| \frac{1}{B} \sum_{b=1}^B \left( \hat{\Psi}_{(m,b)} - \est \right)^2 - \mathbb{E}[ (\bootest - \est)^2 | \data ] \right| \geq \varepsilon \right)  \nonumber \\
&\qquad \leq \frac{1}{B \varepsilon^2}  \mathbb{E}[\var[(\bootest - \est)^2 | \data]] \label{eq:theorem2line2} \\
&\qquad=\frac{1}{B \varepsilon^2} \mathbb{E}\left[\mathbb{E} \left[ \left((\bootest - \est)^2 - \mathbb{E}[(\bootest - \est)^2 | \data]\right)^2 \mid \data \right] \right] \nonumber \\
&\qquad=\frac{1}{B \varepsilon^2} \mathbb{E}\left[\left((\bootest - \est)^2 - \mathbb{E}[(\bootest - \est)^2 | \data]\right)^2 \right] \label{eq:theorem2line4} \\
&\qquad \leq \frac{1}{B \varepsilon^2} \mathbb{E}\left[(\bootest - \est)^4 \right]. \label{eq:theorem2line5}
\end{align}
where (\ref{eq:theorem2line2}) follows since \(\left( \hat{\Psi}_{(m,b)} - \est \right)^2, b = 1, \dots, B\) are identically distributed and expectations are linear,
(\ref{eq:theorem2line4}) follows from the tower property of conditional expectations, and (\ref{eq:theorem2line5}) follows by (\ref{eq:varbound}).
Taking the limit as \(B \rightarrow \infty\) concludes the proof.
\section{Data-generating mechanism}
\label{sec:datagenerating}

In this section, we describe the data-generating mechanism for our
simulation study (Section \ref{sim_ltmle}). See
\url{https://github.com/jsohlendorff/cheap\_subsampling\_simulation\_study} for
the \textbf{R} code. We denote by \(Y_t\) the outcome, where \(Y_t=1\) if an
event has occurred at time \(t\) and \(Y_t=0\) otherwise.  We also
denote the treatment by \(A_t\) (1 is treated and 0 is untreated), the
time-varying confounders by \(W_t\) (continuous), and the censoring
indicators by \(C_t\) (1 is uncensored and 0 is censored). We generate
the data at baseline (\(t = 0\)) and at two time points (\(t = 1\) and
\(t = 2\)) as follows:
\begin{align*}
  W_0 &\sim \mathcal{N}(0,1), \\
  A_0 &\sim \text{Bern}(\expit(-0.2 + 0.4 W_0)), \\
  C_1 &\sim \text{Bern}(\expit(3.5 + W_0)), \\
  Y_1 &\sim \begin{cases}
    \text{Bern}(\expit(-1.4 + 0.1 W_0 - 1.5 A_0)), & \text{if } C_1 = 1 \\
    \emptyset, & \text{if } C_1 = 0,
  \end{cases} \\
  W_1 &\sim \begin{cases}
    \mathcal{N}(0.5 W_0 + 0.2 A_0, 1), & \text{if } C_1 = 1 \text{ and } Y_1 = 0 \\
    \emptyset, & \text{otherwise},
  \end{cases} \\
  A_1 &\sim \begin{cases}
    \text{Bern}(\expit(-0.4 W_0 + 0.8 A_0)), & \text{if } C_1 = 1 \text{ and } Y_1 = 0 \\
    \emptyset, & \text{otherwise},
  \end{cases} \\
  C_2 &\sim \begin{cases}
    \text{Bern}(\expit(3.5 + W_1)), & \text{if } C_1 = 1 \text{ and } Y_1 = 0 \\
    0 & \text{if } C_1 = 0 \\
    \emptyset, & \text{otherwise},
  \end{cases} \\
  Y_2 &\sim \begin{cases}
    \text{Bern}(\expit(-1.4 + 0.1 W_1 - 1.5 A_1)), & \text{if } C_2 = 1 \text{ and } Y_1 = 0 \\
    1 & \text{if } C_2 = 1 \text{ and } Y_1 = 1 \\
    \emptyset, & \text{if } C_2 = 0,
  \end{cases}
\end{align*}
where \(\expit(x) = \frac{1}{1 + \exp(-x)}\) is the logistic function, \(\mathcal{N}(\mu, \sigma^2)\) is the normal distribution with mean \(\mu\) and variance \(\sigma^2\),
\(\text{Bern}(p)\) is the Bernoulli distribution with probability \(p\), and \(\emptyset\) is the missingness indicator.

\end{document}